\newcommand{\xmark}{\ding{55}}
\newcommand{\keywords}[1]{\par\addvspace\baselineskip
\noindent\keywordname\enspace\ignorespaces#1}
\begin{document}

\mainmatter  

\title{Estonian Voting Verification Mechanism Revisited}

\titlerunning{Estonian Voting Verification Mechanism Revisited}

%
%

\author{K\"{o}ksal Mu\c{s}$^{1}$ \and Mehmet Sab{\i}r Kiraz$^{2}$ \and Murat Cenk$^{3}$ \and \.{I}sa Sertkaya$^{2}$}
\authorrunning{Estonian Voting Verification Mechanism Revisited}

\institute{$^{1}$\.{I}stanbul Ayd{\i}n University, $^{2}$T\"UB\.{I}TAK B\.{I}LGEM \\$^{3}$Middle East Technical University Turkey
\mailsa\\
\mailsb\\ 
}


%
%

\toctitle{Lecture Notes in Computer Science}
\tocauthor{Authors' Instructions}
\maketitle

\begin{abstract}
	
After the Estonian Parliamentary Elections held in 2011, an additional verification mechanism was integrated into the i-voting system in order to resist corrupted voting devices, including the so called \textsf{Student's Attack} where a student practically showed that the voting system is indeed not verifiable by developing several versions of malware capable of blocking or even changing the vote. This mechanism gives voters the opportunity to verify whether the vote they cast is stored in the central system correctly. However, the verification phase ends by displaying the cast vote in plain form on the verification device. In other words, the device on which the verification is done learns the voter's choice. In this work, our aim is to investigate this verification phase in detail and to point out that leaking the voter's choice to the verification application may harm the voter privacy. Additionally, when applied in a wide range, this would even compromise the fairness and the overall secrecy of the elections. In this respect, we propose an alternative verification mechanism for the Estonian i-voting system to overcome this vulnerability. Not only is the proposed mechanism secure and resistant against corrupted verification devices, so does it successfully verify whether the vote is correctly stored in the system. We also highlight that our proposed mechanism brings only symmetric encryptions and hash functions on the verification device, thereby mitigating these weaknesses in an efficient way with a negligible cost. More concretely, it brings only $m$ additional symmetric key decryptions to the verification device, where $m$ denoting the number of candidates. Finally, we prove the security of the proposed verification mechanism and compare the cost complexity of the proposed method with that of the current mechanism.
	
\end{abstract}

\keywords{Internet Voting, Privacy, Verifiability, Trust}

\section{Introduction} \label{intro}

Technology is frequently used in daily routines for governmental or banking services via the internet using computers or smart devices. Among these services, internet voting (i-voting) has the potential of increasing election participation, allowing voters, especially handicapped citizens or those citizens living abroad, to cast a vote without going to polling stations on a specific day or time. However, related security issues have not been taken into extensive consideration when the users install applications onto their devices. In particular, by not paying attention to the permissions given to the applications, users turn their smart devices into potential targets for malicious malware that may be used to obtain critical information about users \cite{SmartDevice2014}. 

Estonian i-voting protocol with its verification mechanism present an interesting case because it avoids the additional pre- and post-channels as seen in the Norwegian protocol, in which the verification is performed via smart devices \cite{SEC14, Meter2015a}. Since 2005 Estonian i-voting system is still being used and the number of i-voters increases in every election. In 2005 local election trial, while only 1.9\% of all votes were cast using the i-voting system, more specifically, 5.5\%, 14.7\%, 15.8\%, 24.3\%, 21.2\%, 31.3\% and 30.5\% of votes were cast using the i-voting system in the upheld elections successively \cite{statistics}. These statistics show that the increasing number of citizens prefer to use i-voting system. Accordingly, security concerns related the i-voting system should be considered more seriously.

The i-voting system aims to be at least as secure as traditional paper ballots, meaning that i-voting should meet both cast-as-intended and recorded-as-cast requirements \cite{E2E,EstSecAn-2010}. As mentioned in \cite{VRF14, APP12-10Pg}, client-side weaknesses were experienced in both the cast-as-intended and recorded-as-cast mechanisms during Estonia's 2011 parliamentary elections, so called {\sf Student's Attack}. Therefore, after the 2011 election, a verification mechanism was added to the system that gives voters the opportunity to verify the vote stored in the system via a smart device with a camera and internet connection. The verification mechanism pilot was first tested in the 2013 local elections and then used in the European Parliament Elections in 2014, and the Parliamentary Elections in 2015. Although using an application on a smart device for voting verification solves the aimed security weaknesses, it may bring with it additional problems related not only to the voter privacy, but also to the secrecy of election results.

\subsection{Contributions} 

In this work, we point out an important privacy issue in the verification mechanism of the Estonian i-voting system. The motivation of our attack comes from the fact that all voter details including the real vote are displayed by the verification device. We stress that if the smart device running the verification application is corrupted, then vote privacy can be easily compromised by sniffing the process on the verification device of which the voter is most probably the owner. In the post-Snowden world, although there is a growing awareness and concern about security, privacy and integrity of data on our mobile devices, there is still considerable number of mobile users who install mobile applications without paying attention to their potential security or privacy issues. Therefore, assuming the corruption of a smart device is relevant due to the huge number of increase in malwares during the last years \cite{FFCHW11,JS12,symantic,ZXHC14,Bach15,MobilePrivacy,MobilePrivacyLeakage,MobileMonitoring,MobileAttacks}. Hence, it is possible for an adversary to acquire an IMEI number and other private information, such as location, contacts, phone number, emails, and photos from smart devices including voting details, thereby compromises the voters' privacy. 

The goal of this paper is to mitigate the described privacy leakage of the Estonian i-voting verification mechanism. In this respect, we propose a new, privacy-preserving, and an efficient verification mechanism even if a corrupted verification device is used. Our proposal is quite practical since only a few additional symmetric encryptions on the verification device is performed. Secondly, the secrecy of the election results may also be violated within a wide range attacks. Specifically, an attacker may obtain information about the partial results of the election before it has ended \cite{VRF14}. In this work, however, our updated verification mechanism ensures the same security level without leaking any information about the vote. 

\subsection{Organization} The rest of the paper is organized as follows. Related works are discussed in the following section. The necessary preliminaries and underlying cryptographic mechanisms are explained in Section \ref{preliminaries}. The current Estonian i-voting system along with its components, security and threat models are explained in Section \ref{estonian}. A new potential privacy issue, the proposed voting protocol, and its security model are given in Section \ref{proposedsystem}. The security analysis of the proposed system is presented in Section \ref{newsecurity}. Section \ref{complexity} compares the complexities of Estonian system with our proposed system. Finally, Section \ref{conclusion} concludes the paper.

\section{Previous Work}\label{previous}

In this section, we will give a brief overview of related work, focusing on only internet voting schemes. After the 2011 elections in Estonia, Heiberg et al. published a paper discussing new attacks and weaknesses resulting from client and server side weaknesses namely {\sf Student's Attack} \cite{APP12-10Pg, VRF14, Hal15, Est15}. The designers of the Estonian i-voting system improved it by adding a verification mechanism. Like in the Norwegian i-voting scheme, using SMS services as a post channel was a possible solution; however, not all citizens may register their mobile numbers. Furthermore, the post channel mechanism was not only rather expensive, it also had various problems, as already seen in the Norwegian election system \cite{NRW2012, Meter2015a}. After a period of research and analysis, it was agreed that an individual verification mechanism using smart devices without requiring any personal information would be the most suitable verification channel for the Estonian i-voting system \cite{VRF14}. It also well-known that the Helios system is end-2-end verifiable which is not sufficient for secure elections since it does not prevent attacks from both corrupted client and servers \cite{Gs13, Halderman2015, BRR15, GKVWZ16}.

The designers of the Estonian i-voting system claim that it was as reliable and secure as the conventional election \cite{SEC14-20, Meter2015a}. Contrary to their security claims, in \cite{SEC14}, Springall et al. reported that the system is plagued by serious procedural and architectural weaknesses enabling client-side attacks that skew the results of the election undetectably bypassing the ID card system and smart device verification mechanism. Additionally, it is claimed that there are several inadequate procedural controls, lax operational security, insufficient transparency and several vulnerabilities in the published code. Moreover, in the same work, Springall et al. implemented a mock election in which they experienced both client and server side attacks. In responses the authors presented their recommendations on how to eliminate inadequate procedural controls and lax operational security weaknesses. In \cite{LOG15}, Heiberg et al. researched ways to eliminate transparency weaknesses using an auditing mechanism.  

After the NSA whistle-blowing revelations from Edward Snowden it is not easy to guess that the future brings more security and privacy risks for mobile devices \cite{FFCHW11,JS12,ZXHC14}. For example, users can be fooled into installing malicious applications on their devices or to grant unauthorized remote access \cite{FFCHW11,JS12,symantic,ZXHC14,Bach15,MobilePrivacy,MobilePrivacyLeakage,MobileMonitoring,MobileAttacks}. Hence, an adversary can easily identify the owner of the smart device via private information, such as one's IMEI number, location, contacts, phone number, emails, and photos. More specifically, IMEI numbers might also be required to be recorded into a central system beforehand which are used to identify and authenticate the mobile device whenever there is a connection request to a carrier. Those IMEI numbers, which are not recorded into the system, can be banned from communicating (e.g., \cite{turkey2016}). For these reasons, one should never be able to obtain any information about the intention of a voter from the voting details on the verification device .

\section{Preliminaries} \label{preliminaries}
In the next section, we will present the general setup and symbols needed to presenting our protocol.\\
\\
\textbf{Underlying Cryptosystem.} Denote a symmetric key encryption process as $\mathcal{E}_{\textsf{sym}}=\textsf{SymEnc}_{k}(M)$ and decryption as $M = \textsf{SymDec}_{k}(\mathcal{E}_{\textsf{sym}})$, where $k$ is a secret key and $M$ is a plaintext to be encrypted. We also denote the hashing of a message $M$ as $\textsf{H}(M)$, where $M$ is a message and $H : \{0,1\}^* \rightarrow \{0,1\}^t$. AES-256 and SHA3-256 (where $t=256$) can be utilized for symmetric encryption and hash function, respectively \cite{nist:2001a,nist:2015a}.

The underlying public key encryption scheme is semantically secure \footnote{Note that semantically secure cryptographic algorithms are basically randomized encryptions meaning that encryption of the same votes is uniformly indistinguishable from each other.} (e.g., Paillier \cite{paillier}, and ElGamal \cite{elgamal}). An election specific public and private key pair ($pk_{S}, sk_{S}$) is generated by the servers of the National Electoral Committee ({\sf NEC}) in a $k$  of $n$ threshold manner. In other words, it is generated by the cooperation of $n$ independent parties and it is not possible to regenerate the key pair if at least $k$ of them do not cooperate. Furthermore, $pk_{S}$ is mounted to voter applications \textsf{VoterApp}. Similarly, ($pk_{\mathcal{V}}, sk_{\mathcal{V}}$) denotes a public and private key pair of a user $\mathcal{V}$. $E_{\textsf{asym}}=\textsf{AsymEnc}_{pk_\mathcal{V}}(M)$ denotes an encryption of a message $M$ using the public key of the voter $\mathcal{V}$. Similarly, $\textsf{Sign}_{sk_\mathcal{V}}(M)$ denotes the signature of a user $\mathcal{V}$ on a message $M$ using the private key of $\mathcal{V}$.


\section{The Estonian Internet Voting Protocol and its Security Analysis}\label{estonian}

\subsection{Components of the System}

The Estonian i-voting mechanism is composed of three main parts: (1) client-side applications (a voter application \textsf{VoterApp} and a verification application \textsf{VerifApp}), (2) a {\sf Central System}, and (3) Auditing and Counting processes. 

A \textsf{VoterApp} is performed by the citizens eligible to vote via their ID cards. \textsf{VoterApp} is already developed and published by {\sf NEC}, and should be installed beforehand. {\sf VerifApp} should be installed on a smart device. Note that {\sf VerifApp} can be developed by any parties, including \textsf{NEC}, political parties, or an open source community. The {\sf Central System} has three main servers for forwarding, storing and counting phases under \textsf{NEC} responsibility. The Vote Forwarding Server ({\sf VFS}) is the server that the client-side applications authenticate, send signed and encrypted votes, and obtain the required data for both the voting and the verification stages. The Vote Storage Server ({\sf VSS}) stores the signed encrypted votes during the voting period. At the end of the election, it removes double votes, cancels ineligible voters, and prepares the votes to be tallied by anonymizing the encrypted votes via a mixnet mechanism (e.g., \cite{Mixnet2002}). 

Separated from the rest of the system by an air gap, the Vote Counting Server tallies the anonymized votes and computes the election results. During all these processes, the auditing mechanisms save logs of the {\sf Central System} events in order to resolve independent auditors' disputes and complaints.

\subsection{The Estonian I-Voting Protocol}
For simplicity of describing the voting protocol, we divide it into two phases: the voting phase and the verification phase. The voting phase begins by {\sf VoterApp} authenticating {\sf VFS} via a {\sf TLS connection}. A voter $\mathcal{V}$ receives the related candidate list $CL =\{c_1,\cdots,c_m\}$ where $c_i$'s are candidates' unique identity values and $m$ denotes the number of candidates. Next, the voter $\mathcal{V}$ chooses a candidate $c \in CL$ to cast the vote. {\sf VoterApp} generates a signed and encrypted vote $\textsf{SignEncVote}=\textsf{Sign}_{sk_{\mathcal{V}}}(E_{\textsf{asym}})$ where $ E_{\textsf{asym}} = \textsf{AsymEnc}_{pk_{S}}(c,r)$ and $r \in_R \{0,1\}^\kappa$ is a random number, $\kappa \in \mathbb{N}$. Next, {\sf VoterApp} sends \textsf{SignEncVote} to {\sf VFS}, and then receives a vote reference {\sf voteref} which is a receipt to be used in the verification phase.

During the verification phase, {\sf VerifApp} receives $r$ and {\sf voteref} from {\sf VoterApp}, request the related data from {\sf VFS} by the {\sf voteref}, and computes the vote. Finally, {\sf VerifApp} shows the recorded vote on the screen. If the voter confirms the correctness of the cast vote then the voting procedure ends successfully, otherwise, the voter $\mathcal{V}$ puts an alarm.

\subsubsection{Voting Stage:} 
\begin{enumerate}
	\item A voter $\mathcal{V}$: Authenticates to {\sf VFS} through {\sf VoterApp} using a national ID Card.
	\item {\sf VFS}: Sends $CL =\{c_1,\ldots,c_m\}$ to {\sf VoterApp} where $m$ is the number of candidates.
	\item $\mathcal{V}$: Chooses $c$ from $CL$ 
	\item {\sf VoterApp}:
	\begin{enumerate}
		\item Generates a random number $r$. 
		\item Encrypts $c$ and $r$ by $pk_{S}$,  $E_{\textsf{asym}}=\textsf{AsymEnc}_{pk_{S}}(c,r)$.
		\item Signs $E_{\textsf{asym}}$ by $sk_{\mathcal{V}}$, i.e. $\textsf{SignEncVote}=\textsf{Sign}_{sk_{\mathcal{V}}}(E_{\textsf{asym}})$ .
		\item Sends \textsf{SignEncVote} to {\sf VFS}.
	\end{enumerate}
	\item {\sf VFS}: 
	\begin{enumerate}
		\item Stores $\textsf{SignEncVote}$.
		\item Generates {\sf voteref}.
		\item Sends {\sf voteref} to {\sf VoterApp}.
	\end{enumerate}
\end{enumerate}

\subsubsection{Verification Stage:}
Note that the verification stage is optional and used only to ensure whether the vote has been correctly stored in {\sf VFS}. It is also important to note that for security purposes, {\sf VerifApp} and {\sf VoterApp} should not be installed on the same device. Additionally, it should be noted that {\sf VerifApp} scans the {\sf QR code} by camera instead of obtaining it via an internet connection. 

\begin{figure*}[htpb]
	\centering
	\caption{Verification stage of the Estonian i-voting protocol}
	\includegraphics[width=0.9\linewidth]{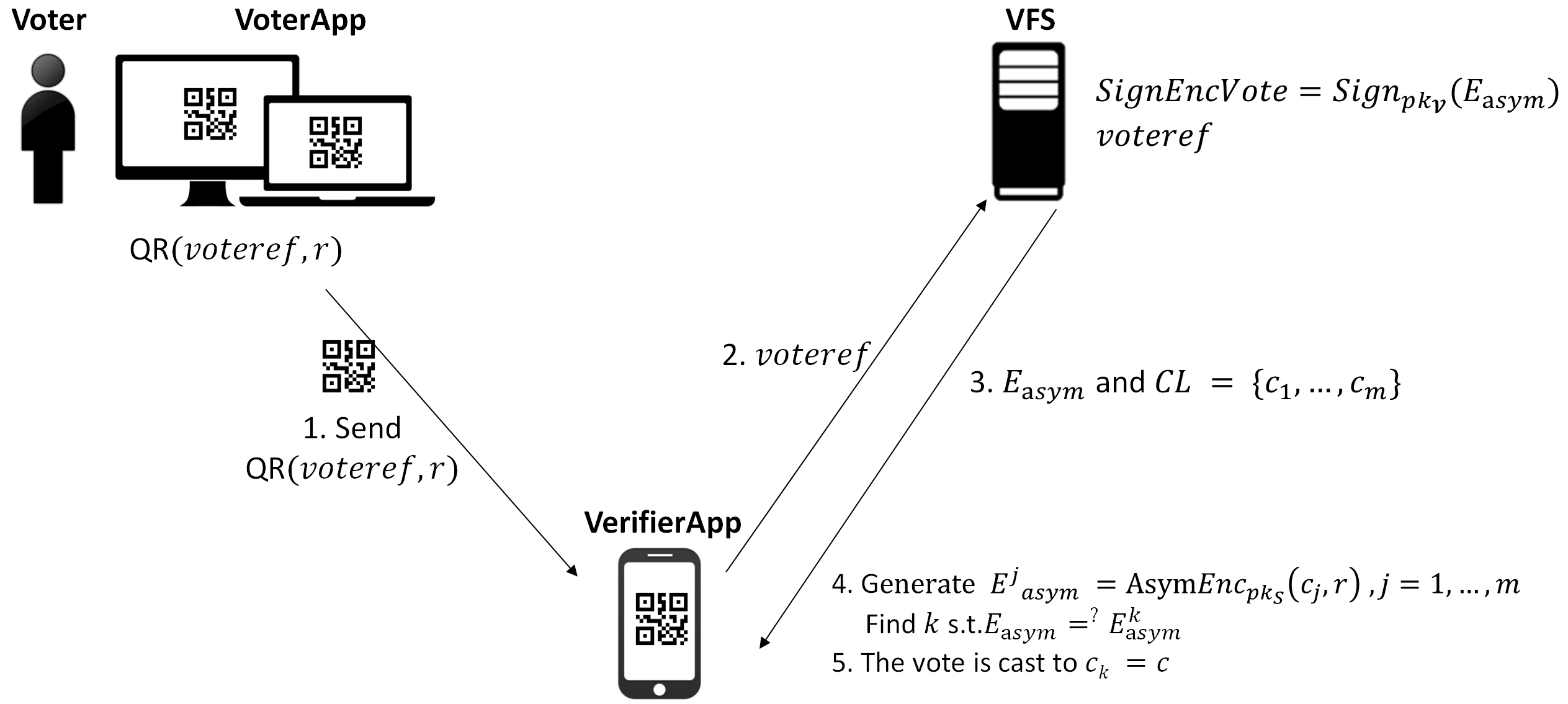}
	\label{fig:VerificationPhase_Orj}
\end{figure*}

\begin{enumerate}
	\item \begin{enumerate}
		\item {\sf VoterApp}: Generates a {\sf QR code} including $r$ and {\sf voteref} and show on the screen. 
		\item {\sf VerifApp}: Scans {\sf QR code} by camera. 
	\end{enumerate} 
	\item {\sf VerifApp}: Sends {\sf voteref} to {\sf VFS}.
	\item {\sf VFS}: Sends $E_{\textsf{asym}}$ and $CL$ to {\sf VerifApp}.
	\item {\sf VerifApp}: 
	\begin{enumerate}
		\item Computes $E_{\textsf{asym}}^j = \textsf{AsymEnc}_{pk_{S}}(c_j, r)$ for all $j=1,\cdots,m$. 
		\item Finds $\ell$ such that $E_{\textsf{asym}} \stackrel{?}{=} E_{\textsf{asym}}^{\ell}$ for some $\ell \in \{1,\cdots,m\}$. 
		\item Shows $c_{\ell}$ on the screen.
	\end{enumerate}
	\item $\mathcal{V}$: Checks $c_{\ell} \stackrel{?}{=} c$.
	\begin{enumerate}
		\item If $c_{\ell} = c$, the vote is received and stored {\sf VFS} without any modification. 
		\item Else, $\mathcal{V}$ puts an alarm (which basically shows that malware is present).
	\end{enumerate}
\end{enumerate}

\subsection{Security Model of the Estonian I-Voting Protocol} \label{security}

A security analysis of the current i-voting system of Estonia is discussed in detail in \cite{VRF14,SEC14} which are briefly mentioned in attack scenarios. Estonian i-voting security model assumes that either {\sf VoterApp} or the device that runs {\sf VoterApp} is malicious. We note that the assumptions {\sf VerifApp} and {\sf VFS} collude maliciously or {\sf VerifApp} and {\sf VoterApp} collude maliciously are not realistic since the duty of {\sf VerifApp} is to independently check the correctness of {\sf VFS} and {\sf VoterApp}. Furthermore, as noted in \cite{EstSecAn-2010}, limited number of corrupted voters' devices are accepted as a reasonable risk. 

\subsubsection{Attack Scenarios.} The main attack scenarios are about ballot integrity, the reliability of the voting system, and coercion resistance.  

\begin{itemize}
	\item{\textbf{Manipulation Attacks.}} Manipulation attacks consist of modifications to a vote without the knowledge of the voter $\mathcal{V}$. These attacks are aimed to change the vote to either a predetermined or a random candidate. There are basically two variants of manipulation attacks:
	
	\begin{itemize}
	\item{\sf Student's Attack.} In Estonia's 2011 parliamentary elections, the {\sf Student's Attack} exposed that neither ballot integrity and secrecy in the election was guaranteed \cite{APP12-10Pg}. The attack is based on installing a malware to the device that runs on {\sf VoterApp}. This malware is designed so that it undermines {\sf VoterApp} and while the vote is being cast, it silently diverts or cancels the intended vote. At first, this attack was not considered as a thread (see \cite{APP12-10Pg,SEC14}). But, after, it became a real and efficient attack \cite{APP12-10Pg}. So that, a verification mechanism is integrated into the system \cite{VRF14}.

	\item{\sf Ghost Click Attack.} A malicious software that runs on the same device as {\sf VoterApp} can obtain the {\sf PIN code} of the {\sf ID card} during the voting process. When the {\sf ID card} is reinserted, the malware may re-vote silently without being detected \cite{SEC14}. Although this is an interesting attack, it is not included in the scope of this work.  
	\end{itemize}
		
	\item{\textbf{Reliability Issues.}} 
	By decreasing voters' confidence in the i-voting system, an election can be held questionable without there being any violation to vote secrecy. Therefore, not only an effective verification mechanism must be set up to prevent such attacks,  but also, citizens should be well informed about the process and security concerns about the system \cite{EstSecAn-2010}. In fact, fairness and the secrecy of the elections should be satisfied, that is in any paper based or electronic voting scheme, the election's partial or total results can not be revealed before the tallying process. 
		
	\item{\textbf{Coercion Attacks.}} An i-voting system must prevent a voter from being able to prove to a coercer how he voted. Therefore, the system should provide \textsf{receipt-freeness} or \textsf{coercion resistance}. Allowing vote updates (i.e., re-voting) is the countermeasure to prevent such attacks.

\end{itemize}

\subsubsection{Threat Model.}

\begin{itemize}
	\item \textbf{Malicious {\sf VFS}.} In this case, there is a full privacy leakage where {\sf VFS} may leak all stored information as a result of potential {\sf Insider Attacks} \cite{SEC14}. As a countermeasure, auditing mechanisms with independent auditors can verify the computations of {\sf VFS} \cite{LOG15}. 
	
	\item \textbf{Malicious {\sf VoterApp} and its adversarial environment.} During the preparation of signed encrypted vote, a malicious {\sf VoterApp} can change the voter's intention $c \in CL$ by $c^* \in CL$ where $c \neq c^*$ and then sends encrypted form of $c^*$ instead of $c$. During the verification, {\sf VerifApp} will reveal that the ballot does not reflect the voter's own will as intended and puts an alarm. The voter $\mathcal{V}$ re-votes again from another device. Note that integrity of the stored data in {\sf VFS} is ensured after successful verification. Therefore, manipulating the vote using a malicious {\sf VoterApp} is not possible.
	
	\item \textbf{Malicious {\sf VerifApp} and its adversarial environment}: It is possible to sniff data from a smart device using any malicious application installed on the device \cite{FFCHW11,JS12,symantic,ZXHC14,Bach15,MobilePrivacy,MobilePrivacyLeakage,MobileMonitoring,MobileAttacks}. Leaking verified votes from the smart devices gives an opportunity to get some idea about the election results before the election ended. This attack violates one of the important limitation "secrecy of election results" of the Estonian i-voting system. Another attack based on the leakage is that using a device's IMEI number allows one to match the vote with the voter $\mathcal{V}$. In other words, the link between the vote and the voter $\mathcal{V}$ is revealed. The attack disrupts the privacy of $\mathcal{V}$ which is one of the main limitation of the Estonian i-voting system. Note that, the candidate choice of a voter represents the voter's political view. Therefore, when the voter's intention revealed, there is no way to recover. In other words, it is not possible to choose another candidate and vote again. More details about the attacks are given in \ref{potentialweakness}.
\end{itemize}

In the next section, we will show that there is a practical and realistic attack that may become a real issue. 

\section{A New Potential Privacy Issue in the Estonian Verification Protocol and Our Improved Verification System}\label{proposedsystem}

\subsection{A Vote Privacy Issue in the Estonian Verification Mechanism}\label{potentialweakness}

Recall that {\sf VerifApp} is integrated into the i-voting system to be resistant against {\sf Student's Attack}. {\sf VerifApp} can be freely developed by {\sf NEC}, by an open source community or even by one who is able to write her own verification software while {\sf VoterApp} is developed only by NEC. In contrast, most citizens cannot develop or control the trustability of {\sf VerifApp} even it is downloaded from a known source. Therefore, it is not realistic to assume that {\sf VerifApp} is honest. The Estonian i-voting system is designed in this manner and {\sf VerifApp} explicitly outputs the voters' intention in plain form.

Even if {\sf VerifApp} itself were to work properly, any malicious application running on the same device may sneak into the processes in an attempt to monitor inputs and outputs. Therefore, it would be sufficient to have any privileged application running on the same device \cite{FFCHW11,JS12,symantic,ZXHC14,Bach15,MobilePrivacy,MobilePrivacyLeakage,MobileMonitoring,MobileAttacks}. As soon as an adversary manages to install such a privileged application, the only thing that needs to be done is to grab a screen shot of the device while {\sf VerifApp} displays the output to the voter $\mathcal{V}$. After that, the cast vote will also be known by the adversary. Furthermore, since IMEI numbers and other private information (e.g, contacts, phone number, location, emails, photos) can be obtained by a malicious application loaded on the verification device, an adversary can obtain $\mathcal{V}$'s identity. Hence, privacy of $\mathcal{V}$ can be easily compromised. We highlight that this privacy issue may lead to coercion or vote buying (e.g., an adversary can force voters to install a malicious application on their smart devices for later check the their actual votes). 

When considered on the individual level, this attack causes privacy leakage. On the other hand, when applied over a wide range by sniffing the plain votes from the {\sf VerifApp}s without trying to find the related identity information, this attack may also promise the reliability and, more importantly, the secrecy of the elections. Because, as explained in \cite{EstOverview-2010}, in any election the results would not be exposed before the counting process has been partially or totally completed. Therefore, the possibility of a corrupted verification device must be included in the design criteria of the Estonian i-voting system, the current Estonian i-voting system requires additional countermeasures to guard against this privacy leakage. 

In the next section, we offer a new protocol for the verification mechanism which eliminates the trust to the verification device. 

\subsection{Our Enhanced Security Model}

Our security model extends the Estonian scheme by eliminating the need to trust the verification device on which {\sf VerifApp} runs. Namely, the extended scheme gives specific information to {\sf VerifApp} on which the voted candidate should be recognized only by the voter $\mathcal{V}$ with her own eyes manually. Even {\sf VerifApp} can not learn the voted candidate using the data which is gathered from the network or its outputs by {\sf VerifApp}. More formally, the probability of correctly guessing $\mathcal{V}$'s intention using the information given by {\sf VerifApp}'s inputs and outputs should be the same as the conditional probability of guessing $\mathcal{V}$'s intention randomly. We stress that using a mobile device to verify the vote makes the Estonian i-voting scheme vulnerable as in the original system because {\sf VerifApp} outputs $\mathcal{V}$'s intention explicitly. In order to overcome this vulnerability, each candidate will be displayed on the screen with a verification parameter and to let $\mathcal{V}$ checks these results with her own eyes using the parameters and decides whether the correct parameter and the candidate is matched. 

\subsection{Our Verification Mechanism}\label{lastproposedsystem}

We are now ready to describe our proposal. The actual flow of the protocol is similar to the Estonian i-voting system. In order to make the verification phase more resistant against the attacks described in Section \ref{potentialweakness}, we introduce a new parameter $q$ that will only be known to {\sf VoterApp} and {\sf VFS} where the size of $q$ is the same as the output size of the hash function $H$. More concretely, $\mathcal{V}$ chooses $q$ as a verification parameter and securely sends it to {\sf VFS} during the voting phase. Furthermore, during the verification phase, {\sf VFS} computes $\textsf{H}(E_{\textsf{asym}})$ and uses it as a symmetric key to encrypt $q$ for transmitting $\mathcal{E}_{\textsf{sym}}=\textsf{SymEnc}_{\textsf{H}(E_{\textsf{asym}})}(q)$ to {\sf VerifApp}. In order to let $\mathcal{V}$ check her vote, {\sf VerifApp} will perform a number of decryptions $q_i=\textsf{SymDec}_{\textsf{H}(E_{\textsf{asym}}^i)}(\mathcal{E}_{\textsf{sym}}), i=1,\cdots,m$ depending on the number of candidates. Hence, it outputs a list $Q=\{q_1,\cdots,q_m\}$ of possible verification parameters with the corresponding candidates. Finally, the voter $\mathcal{V}$ will manually check the output list on the {\sf VerifApp}'s screen  with her own eyes to learn the given $q$ during the voting phase on the position of the chosen candidate . The verification phase ends successfully if $q$ exists and its index is the same as the index of the candidate chosen by the voter $\mathcal{V}$ in the list $CL$. 


Here, the trick is to encrypt $q$ with the hash of the encrypted vote as the symmetric key (i.e., $\textsf{H}(E_{\textsf{asym}})$). In the verification phase, {\sf VerifApp} tries all candidates in order to generate the possible keys (hash of the possible encrypted votes). In order to complete the verification successfully, the index of the chosen candidate and the index of $q$ in $Q$ should match. Otherwise, either an alarm would be raised, or the voting procedure should be re-started, or the verification phase should be run in another device. It should be noted that manually finding the correct verification parameter with eyes is an important security measure so as not to reveal a voter's intention to {\sf VerifApp}. If the chosen candidate is displayed in plain form as in the Estonian i-voting scheme, adversaries may obtain a proof of their vote which may lead vote buying or coercion problems.  

The voting and the verification phases are explained in detail below.

\subsubsection{Voting Stage:}

\begin{enumerate}
	\item A voter $\mathcal{V}$: Authenticates to {\sf VFS} through {\sf VoterApp} using a national ID Card \footnote{Once the authentication (i.e, TLS) is complete, the TLS should encrypt all traffic between {\sf VFS} and {\sf VoterApp} to ensure that no data is leaked and to prevent man-in-the-middle attacks.}.
	\item {\sf VFS}: Sends $CL =\{c_1,\ldots,c_m\}$ to {\sf VoterApp} where $m$ is the number of candidates.
	\item $\mathcal{V}$: Chooses $c$ from $CL$ and 4 characters of random verification parameter (composed of 4 bytes per character) which is a random value $q_{\textsf{right}} \in_R \{0,1\}^{32}$.
	\item {\sf VoterApp}:
	\begin{enumerate}
		\item Generates a random number $r \in_R \{0,1\}^{\kappa}$ and $q_{\textsf{left}} \in_R \{0,1\}^{224}$. 
		\item Encrypts $c$ and $r$ by $pk_{S}$,  $E_{\textsf{asym}}=\textsf{AsymEnc}_{pk_{S}}(c,r)$.
		\item Signs $E_{\textsf{asym}}$ by $sk_{\mathcal{V}}$, $\textsf{SignEncVote}=\textsf{Sign}_{sk_{\mathcal{V}}}(E_{\textsf{asym}})$ .
		\item Sends $\textsf{SignEncVote}$ and the value $q=q_{\textsf{left}}||q_{\textsf{right}}$ to {\sf VFS}\footnote{We note that $q$ can also be encrypted (under the \textsf{VFS} public key) and signed by the voter to ensure its source and correctness.}.
	\end{enumerate}
	\item {\sf VFS}: 
	\begin{enumerate}
		\item Stores $\textsf{SignEncVote}$ and $q$. 
		\item Generates {\sf voteref}.
		\item Sends {\sf voteref} to $\mathcal{V}$ for verification phase.
	\end{enumerate}
\end{enumerate}

\begin{figure*}[htpb]
	\centering
	\caption{The voting phase of the proposed protocol}
	\includegraphics[width=1.0\linewidth]{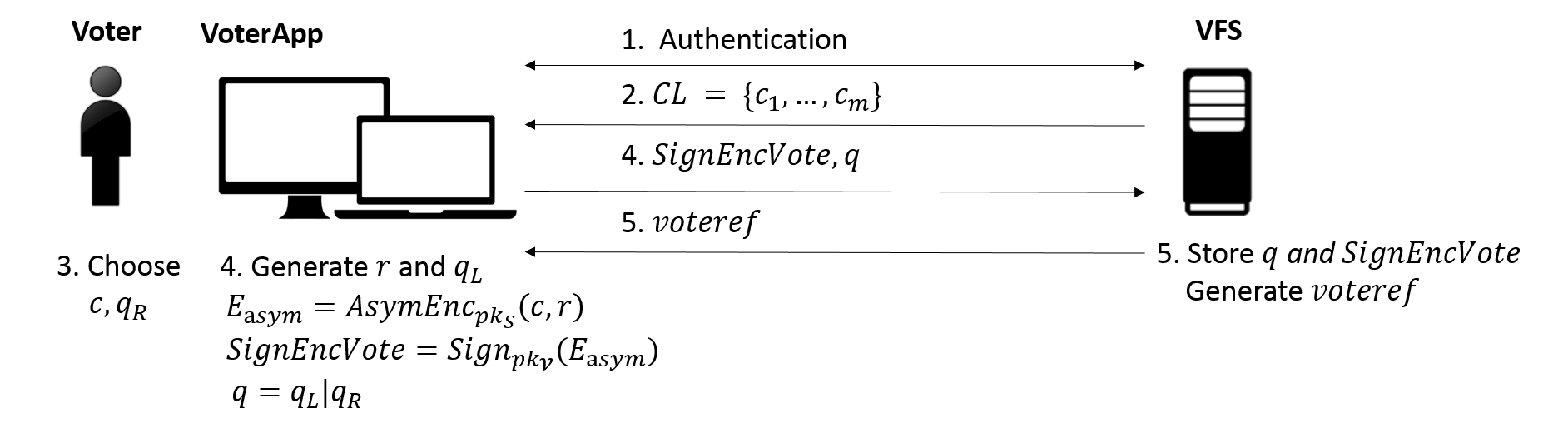}
	\label{fig:VotingPhase_new}
\end{figure*}

\subsubsection{Verification Stage:}

\begin{enumerate}
	\item \begin{enumerate}
		\item {\sf VoterApp}: Generates a {\sf QR code} including $r$ and {\sf voteref} and show on the screen. 
		\item {\sf VerifApp}: Scans the {\sf QR code} by the camera.
	\end{enumerate} 
	\item {\sf VerifApp}: Sends {\sf voteref} to {\sf VFS}.
	\item {\sf VFS}: 
	\begin{enumerate}
		\item Computes $\textsf{H}(E_{\textsf{asym}})$.
		\item Encrypts $\mathcal{E}_{\textsf{sym}}=\textsf{SymEnc}_{\textsf{H}(E_{\textsf{asym}})}(q)$. 
		\item Sends the ordered $m$-tuple $CL$ and $\mathcal{E}_{\textsf{sym}}$.
	\end{enumerate}
	\item {\sf VerifApp}: 
	\begin{enumerate}
		\item For each $c_j \in CL$, $j=1,\cdots,m$, computes; 
		\begin{enumerate}
			\item $E_{\textsf{asym}}^j = \textsf{AsymEnc}_{pk_{S}}(c_j, r)$. 
			\item the hash value $\textsf{H}(E_{\textsf{asym}}^j)$.
			\item $q_j = \textsf{SymDec}_{\textsf{H}(E_{\textsf{asym}}^j)}(\mathcal{E}_{\textsf{sym}})$.
		\end{enumerate}
		\item Shows the ordered $m$-tuple $Q=\{q_1, \cdots,q_m\}$ on the screen.
	\end{enumerate}
	
	\item $\mathcal{V}$: Finds the indices of $q \stackrel{?}{=}q_\alpha \in Q= \{q_1,\cdots,q_m\}$ and $c\stackrel{?}{=}c_\beta \in \{c_1,\cdots,c_m\}$, where $\alpha,\beta \in\{1,\cdots,m\}$ \footnote{As we described below in Section \ref{usability}, for usability concerns, only the last 32-bit integers $q_i$'s in $Q$ as 4 characters are viewed to the voter where $i = 1, \ldots, m$.}. 
	
	\begin{enumerate}
		\item Checks $\alpha \stackrel{?}{=} \beta$.
		\item If $\alpha = \beta$, the vote is received and stored correctly in {\sf VFS}.
		\item Else, $\mathcal{V}$ puts an alarm.
	\end{enumerate}
\end{enumerate}

\begin{figure*}[htpb]
	\begin{center}
	\caption{The Verification phase of the proposed protocol}
	\includegraphics[width=0.9\linewidth]{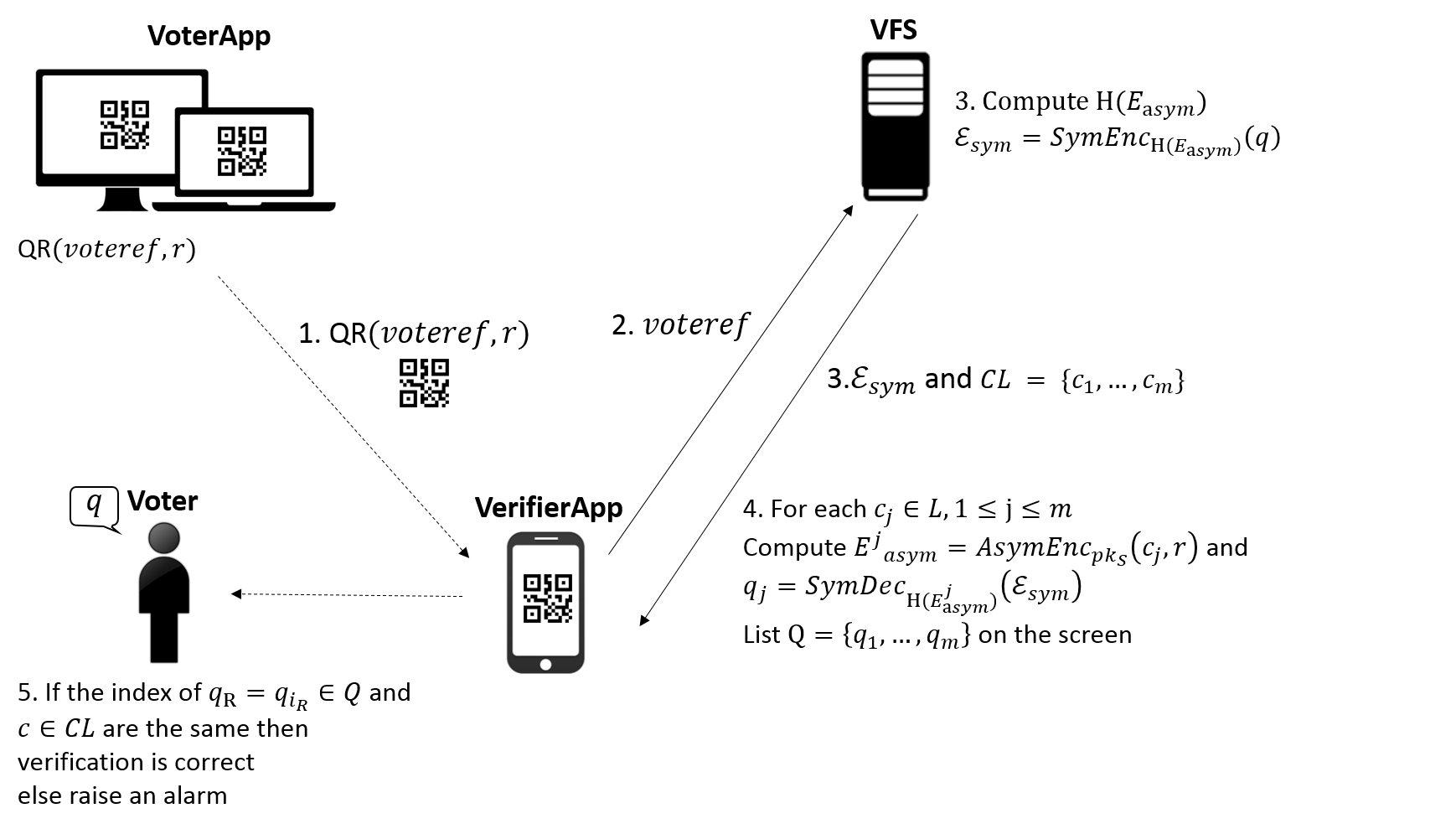}
	\label{fig:VerificationPhase_new}
	\end{center}
\end{figure*}

\section{Security Analysis of Our Verification Mechanism}\label{newsecurity}
We next show the correctness and prove the security of our mechanism. 

\begin{theorem} [Correctness]
	The verification phase of the proposed protocol ensures both the properties recorded-as-cast (the vote is stored in {\sf VFS}) and cast-as-intended (the vote reflects the intention of the voter).
\end{theorem}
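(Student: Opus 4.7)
The plan is to argue that under honest execution of \textsf{VoterApp}, \textsf{VFS}, and \textsf{VerifApp}, the equality check $\alpha \stackrel{?}{=} \beta$ that the voter performs at the final step succeeds with overwhelming probability, and to observe that this single check simultaneously witnesses both recorded-as-cast and cast-as-intended. The skeleton would be: (i) fix the true index $\beta$ such that $c = c_\beta$ in $CL$; (ii) show that $q_\beta = q$ always holds; (iii) show that for every $j \neq \beta$, the probability that $q_j = q$ is negligible; (iv) conclude that the unique $\alpha$ with $q_\alpha = q$ satisfies $\alpha = \beta$, and interpret this equality in terms of the two correctness properties.

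For step (ii), I would unfold the protocol: since \textsf{AsymEnc} is deterministic in $(c,r)$ and \textsf{VerifApp} reuses the same $r$ that \textsf{VoterApp} placed in the QR code, we have $E_{\textsf{asym}}^\beta = \textsf{AsymEnc}_{pk_S}(c_\beta,r) = \textsf{AsymEnc}_{pk_S}(c,r) = E_{\textsf{asym}}$. Because $\textsf{H}$ is a deterministic function, the hash key used by \textsf{VFS} coincides with the key $\textsf{H}(E_{\textsf{asym}}^\beta)$ computed by \textsf{VerifApp}, so by correctness of the symmetric scheme $q_\beta = \textsf{SymDec}_{\textsf{H}(E_{\textsf{asym}})}(\mathcal{E}_{\textsf{sym}}) = q$. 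This step is essentially bookkeeping and uses only correctness of the primitives together with the fact that \textsf{VFS} stored exactly the ciphertext \textsf{VoterApp} sent.

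Step (iii) is the main obstacle and the one I would spend the most care on. For $j \neq \beta$ we have $c_j \neq c$, and the correctness/injectivity of the semantically secure public-key scheme (with respect to a fixed coin $r$) gives $E_{\textsf{asym}}^j \neq E_{\textsf{asym}}$; then collision resistance of $\textsf{H}$ implies $\textsf{H}(E_{\textsf{asym}}^j) \neq \textsf{H}(E_{\textsf{asym}})$ except with negligible probability. Bounding $\Pr[q_j = q]$ then amounts to a wrong-key-decryption argument: modelling \textsf{SymDec} under an independent key as (pseudo)random output over $\{0,1\}^{256}$, the event $q_j = q$ occurs with probability at most $2^{-256}$, and a union bound over $j \in \{1,\dots,m\}$ keeps the total collision probability negligible in the security parameter. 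With (ii) and (iii) in place, $\alpha$ is uniquely defined and equals $\beta$. I would close the proof by reading off the two properties: $\alpha = \beta$ forces the \textsf{VFS}-stored ciphertext to be exactly $E_{\textsf{asym}} = \textsf{AsymEnc}_{pk_S}(c,r)$, which says that the recorded vote encrypts the voter's intended $c$ (cast-as-intended) and that the ciphertext present in \textsf{VFS} at verification time is the one \textsf{VoterApp} submitted (recorded-as-cast).
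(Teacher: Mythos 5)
Your proposal is correct and its core — step (ii) — is exactly the paper's argument: because \textsf{VerifApp} is handed the coin $r$, the re-encryption $E_{\textsf{asym}}^\beta=\textsf{AsymEnc}_{pk_S}(c_\beta,r)$ is deterministic and equals the stored $E_{\textsf{asym}}$, so the recomputed hash key matches and $q_\beta=q$ by correctness of the symmetric scheme. Where you go further is step (iii): the paper never argues that the index $\alpha$ with $q_\alpha=q$ is \emph{unique}; it implicitly assumes no spurious $q_j$ collides with $q$ for $j\neq\beta$. Your wrong-key-decryption plus collision-resistance argument closes that gap and is a genuine improvement in rigor, since without it the voter's check $\alpha\stackrel{?}{=}\beta$ is not well defined. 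One correction to your bound, though: under the usability optimization of Section \ref{usability} the voter compares only the last 32 bits ($q_{\textsf{right}}$, four printable characters) of each $q_i$, so the per-candidate collision probability is $2^{-32}$ rather than $2^{-256}$, and the union bound gives roughly $m/2^{32}$ — still acceptable for realistic $m$, but materially weaker than what you state, and worth flagging explicitly since it is the quantity that actually governs whether the voter can be confused by an accidental match on screen.
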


\begin{proof}
During the verification phase, the {\sf VerifApp} first obtains $r$ and \textsf{voteref} from \textsf{VoterApp}. Next, \textsf{VerifApp} requests the relevant data according to the {\sf voteref} from {\sf VFS}. {\sf VFS} responds with the ordered $m$-tuple candidate list $CL=\{c_1,\cdots,c_m\}$ and $\mathcal{E}_{\textsf{sym}}=\textsf{SymEnc}_{\textsf{H}(E_{\textsf{asym}})}(q)$. Now, suppose that the voter $\mathcal{V}$ votes for the candidate $c_\beta \in CL$ where $\beta \in \{1, \cdots, m\}$. Then, {\sf VerifApp} completes the verification phase by computing $E_{\textsf{asym}}^i=\textsf{AsymEnc}_{pk_S}(c_i,r)$ for each $c_i \in CL$ and outputs the list $Q= \{q_1,\cdots,q_m\}$ where $ q_i = \textsf{SymDec}_{\textsf{H}(E_{\textsf{asym}}^i)}(\mathcal{E}_{\textsf{sym}}).$ Since the random value of the underlying asymmetric encryption $r$ is given to the {\sf VerifApp}, $E_{\textsf{asym}}^i$ becomes deterministic. Therefore, {\sf VerifApp} computes the correct value $q$ successfully. Finally, the voter $\mathcal{V}$ searches for $q$ (i.e., the last four characters of $q$ which were shown on the screen of the voter computer) on the verification device and finds the index $\beta \in \{1, \cdots, m\}$ where $q_{\beta}=q$. 

Therefore, this scheme guarantees that the ballot reflects the intention of the $\mathcal{V}$ (i.e., it has been cast-as-intended) and is correctly stored in {\sf VFS} (i.e., it has been recorded-as-cast). \qed
\end{proof}

\begin{theorem}[Privacy against malicious {\sf VoterApp} and its adversarial environment]
	Our verification mechanism described in Section \ref{lastproposedsystem} is secure against malicious {\sf VoterApp} or adversarial environment of {\sf VoterApp}.
\end{theorem}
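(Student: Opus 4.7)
The plan is to show that, under an honest \textsf{VFS} and the semantic security/collision-resistance assumptions already in force, a malicious \textsf{VoterApp} (or malware sharing its device) has no strategy that simultaneously (i) alters the vote recorded for the voter $\mathcal{V}$ and (ii) passes the verification check at $\mathcal{V}$'s chosen index $\beta$. Since any deviation that is detected triggers an alarm and a re-vote from a trusted device, overwriting the adversarial ballot under Estonia's re-voting policy, the only behaviour of \textsf{VoterApp} compatible with successful verification is the honest one, and the vote that is ultimately tallied therefore reflects $\mathcal{V}$'s intention, with no residual privacy leakage attributable to the verification mechanism itself.

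First I would enumerate the deviations available to a malicious \textsf{VoterApp}: (a) sending $E^{\ast}_{\textsf{asym}} = \textsf{AsymEnc}_{pk_{S}}(c^{\ast}, r^{\ast})$ with $c^{\ast} \neq c$ to \textsf{VFS}; (b) forwarding a verification parameter $q^{\ast} \neq q$; (c) placing inconsistent randomness $r'$ in the \textsf{QR code} handed to \textsf{VerifApp}; or any combination. Because \textsf{VFS} is honest in this threat model, it faithfully stores $(E^{\ast}_{\textsf{asym}}, q^{\ast})$ and, during verification, returns $\mathcal{E}_{\textsf{sym}} = \textsf{SymEnc}_{\textsf{H}(E^{\ast}_{\textsf{asym}})}(q^{\ast})$ together with the ordered $CL$. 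The \textsf{VerifApp}, on a separate device outside the environment of \textsf{VoterApp}, then produces $q_{j} = \textsf{SymDec}_{\textsf{H}(\textsf{AsymEnc}_{pk_{S}}(c_{j}, r'))}(\mathcal{E}_{\textsf{sym}})$ for every $j$.

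Second, I would argue that $\mathcal{V}$'s check $q_{\beta} \stackrel{?}{=} q$ at her intended index $\beta$ can succeed only if $\textsf{H}(\textsf{AsymEnc}_{pk_{S}}(c, r')) = \textsf{H}(E^{\ast}_{\textsf{asym}})$ and $q^{\ast} = q$. By collision-resistance of $\textsf{H}$, the first equality implies $\textsf{AsymEnc}_{pk_{S}}(c, r') = E^{\ast}_{\textsf{asym}}$ except with negligible probability; by the injectivity of the asymmetric encryption in its message slot for fixed randomness, this in turn forces $c^{\ast} = c$. Thus any alteration of the candidate slot is caught with overwhelming probability, regardless of how the adversary schedules $r^{\ast}, r', q^{\ast}$. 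Symmetrically, if \textsf{VoterApp} is honest on $c$ but deviates only on $q^{\ast}$ or $r'$, the same check at index $\beta$ fails and the voter raises an alarm.

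The main obstacle is handling the combined freedom the adversary has over $(r^{\ast}, r', q^{\ast})$: I would rule out a trivial undetected strategy by a reduction that, given an adversary producing a verifying transcript with $c^{\ast} \neq c$, extracts either a collision for $\textsf{H}$ or a distinguisher contradicting the semantic security of the asymmetric scheme (used to preclude a correlated choice of $r^{\ast}$ that makes the two hashes agree non-trivially). Once detection is established, the privacy conclusion follows from the system-level argument already present in the security model: \textsf{VerifApp} receives only $r$, $\textsf{voteref}$, $CL$ and $\mathcal{E}_{\textsf{sym}}$, none of which depend on the voter's intention beyond what is already forced by the honest ciphertext; anything a co-resident malware on the voting device learns is outside the verification mechanism's trust boundary; and the re-vote from an uncorrupted device guarantees that the ballot finally tallied is $\mathcal{V}$'s true choice, so no privacy-relevant quantity leaks through the verification protocol.
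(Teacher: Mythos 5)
Your overall framing---a malicious \textsf{VoterApp} either behaves honestly or is detected at the voter's index $\beta$, with re-voting covering the detected case---matches the paper's intent, and like the paper you read the ``privacy'' theorem as really being about undetected vote alteration. But the pivotal step of your argument is wrong: you claim the check $q_{\beta} \stackrel{?}{=} q$ can succeed \emph{only if} $\textsf{H}(\textsf{AsymEnc}_{pk_{S}}(c, r')) = \textsf{H}(E^{\ast}_{\textsf{asym}})$ and $q^{\ast} = q$. For a generic symmetric cipher (AES-256 here) that is false: decrypting $\mathcal{E}_{\textsf{sym}}$ under a key different from the one used for encryption does not fail, it simply yields some other pseudorandom plaintext, and the voter only compares the last $32$ bits of that plaintext against her chosen $q_{\textsf{right}}$. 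A key mismatch therefore still passes the check with probability about $2^{-32}$, and a malicious \textsf{VoterApp}---which knows $c$ and $q$ and controls $c^{\ast}, r^{\ast}, r'$---can search these parameters offline for a combination for which $\textsf{SymDec}_{\textsf{H}(\textsf{AsymEnc}_{pk_S}(c,r'))}\bigl(\textsf{SymEnc}_{\textsf{H}(\textsf{AsymEnc}_{pk_S}(c^{\ast},r^{\ast}))}(q)\bigr)$ ends in $q_{\textsf{right}}$. Your reduction to collision-resistance of $\textsf{H}$ and to semantic security attacks the wrong failure mode: neither assumption rules out an accidental or engineered match of a $32$-bit tag under mismatched keys.

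This wrong-key case is precisely where the paper's proof spends its effort: it bounds the chance that a guessed $r^{\ast}$ makes the truncated verification value come out right by $\frac{1}{m}\cdot\frac{1}{2^{32}}$ per attempt, argues that a wide-scale attack on several voters drives this to a negligible quantity, and only for a full-length $q$ appeals to the infeasibility of finding $(c^{\ast}, r^{\ast})$ with $\textsf{AsymEnc}_{pk_S}(c^{\ast},r^{\ast}) = E_{\textsf{asym}}$ or of guessing $q$ outright (probability at most $2^{-k}$). To repair your proof you would need either to add this probabilistic analysis of a wrong-key decryption hitting $q_{\textsf{right}}$---and confront the fact that $2^{-32}$ is not cryptographically negligible for a single targeted voter---or to assume a key-committing (robust) symmetric encryption scheme, an assumption the paper does not make.
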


\begin{proof}
Suppose {\sf VoterApp} (or its adversarial environment) cheats and tries to send the vote for another predefined candidate. Below, we show that the voter $\mathcal{V}$ can easily detect this adversarial action via \textsf{VerifApp}. 

Suppose a malicious {\sf VoterApp} tries to find the appropriate parameters to mock the {\sf VerifApp} by finding a value $r^*$ for the candidate $c^* \neq c \in CL$ satisfying $$q'=q'_{\textsf{left}}||q_{\textsf{right}} = \textsf{SymDec}_{\textsf{H}(E'_{\textsf{asym}})}$$ where $E'_{\textsf{asym}}=\textsf{AsymEnc}_{pk_S}(c^*, r^*)$ and $q=q_{\textsf{left}}||q_{\textsf{right}}$. Namely, a malicious {\sf VoterApp} can fool a voter by only showing the correct value $q_{\textsf{right}}$ which is in the position of the chosen candidate $c$.

In our system, for usability concerns, a voter chooses only four characters during the voting phase. However, a malicious \textsf{VerifApp} may guess $r^*$ in such a way that $q'=q'_{\textsf{left}}||q_{\textsf{right}} = \textsf{SymDec}_{\textsf{H}(E'_{\textsf{asym}})}(\mathcal{E}_{\textsf{sym}})$ where $E'_{\textsf{asym}}=\textsf{AsymEnc}_{pk_S}(c^*, r^*)$. However, this probability is $1/m \cdot \frac{1}{2^{32}}$ (choose the correct candidate of the voter and the corresponding value $r^*$). Because voting phase is independent for each voter, this probability goes to negligible for only 3 voters (i.e., $\frac{1}{2^{96}}$).

In the most general case, if a voter chooses the full length of $q$, then the computational cost of finding $r^*$ for certain $c, r$ and $c^*$ such that $\textsf{AsymEnc}_{pk_S}(c^*, r^*)=\textsf{AsymEnc}_{pk_S}(c, r)=E_{\textsf{asym}}$ is infeasible because of the underlying encryption scheme. An attacker may also simply guess $q^*$ which will be correct with probability at most $\frac{1}{2^k}$ where $k$ is the length of $q$. \qed
\end{proof}

\begin{theorem}[Privacy against malicious {\sf VerifApp} and its adversarial environment]
	Our verification mechanism described in Section \ref{lastproposedsystem} is secure against malicious {\sf VerifApp} or adversarial environment of the {\sf VerifApp} and does not leak any information about the $\mathcal{V}$ and her intention.
\end{theorem}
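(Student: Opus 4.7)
The plan is to consider a distinguishing game in which a probabilistic polynomial time adversary $\mathcal{A}$, modelling the malicious \textsf{VerifApp} together with any co-resident privileged applications, picks two candidates $c_0, c_1 \in CL$ and interacts with an honest voter that selects $c_b$ for a uniformly random bit $b \in \{0,1\}$. The view of $\mathcal{A}$ is exactly what passes through or is computable by \textsf{VerifApp}: the randomness $r$ and \textsf{voteref} received from \textsf{VoterApp}, the public list $CL$, the ciphertext $\mathcal{E}_{\textsf{sym}} = \textsf{SymEnc}_{\textsf{H}(E_{\textsf{asym}})}(q)$ sent by \textsf{VFS}, the server public key $pk_S$, and the $m$-tuple $Q = \{q_1, \ldots, q_m\}$ that \textsf{VerifApp} itself produces. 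Crucially, the verification parameter $q$ is never transmitted to or displayed by \textsf{VerifApp}; it is known only to the voter via the voting computer's screen. The goal is to show that $\mathcal{A}$'s advantage in guessing $b$ is negligible.

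First I would note that $r$, \textsf{voteref}, $CL$, and $pk_S$ carry no information about the chosen candidate, so the only object in $\mathcal{A}$'s view that could depend on $b$ is $\mathcal{E}_{\textsf{sym}}$ (and derivatively $Q$, which $\mathcal{A}$ computes from it). Observe that in $Q$ the single entry indexed by the true candidate equals the secret value $q$, while all other entries equal $\textsf{SymDec}_{k_j}(\mathcal{E}_{\textsf{sym}})$ for keys $k_j = \textsf{H}(E_{\textsf{asym}}^j)$ with $j \neq \beta$. I would then carry out a short hybrid argument: modelling $\textsf{H}$ as a random oracle (or using collision resistance together with IND-CPA of the symmetric scheme), the keys $k_j$ for wrong candidates are independent of the key $\textsf{H}(E_{\textsf{asym}})$ used to form $\mathcal{E}_{\textsf{sym}}$, so each such $q_j$ is computationally indistinguishable from a uniformly random string. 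Combined with the fact that the true $q$ was itself chosen uniformly at random by the voter, the whole tuple $Q$ is indistinguishable from $m$ independent uniform strings. In a final hybrid, replace $\mathcal{E}_{\textsf{sym}}$ by the encryption of a fresh uniform value under the correct key: by the semantic security of $\textsf{SymEnc}$, this hop is undetectable, and the resulting $\mathcal{E}_{\textsf{sym}}$ is manifestly independent of $b$.

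Since each hybrid step is indistinguishable under standard assumptions on the symmetric cipher and hash function, the overall view of $\mathcal{A}$ in the game where the voter picks $c_b$ is computationally indistinguishable from a view generated without any reference to $b$. Hence $\Pr[\mathcal{A} \text{ outputs } b] \leq 1/2 + \mathrm{negl}(\lambda)$, which by a standard reduction (enumerating candidate pairs) extends to the claim that any \textsf{VerifApp}-side adversary guesses the voter's intention with probability at most $1/m + \mathrm{negl}(\lambda)$, i.e.\ no better than random guessing over $CL$. I expect the main subtlety to be justifying the ``decryption under an independent key looks uniform'' step cleanly: it is immediate in the random oracle model, but in the standard model one needs to invoke both the collision resistance of $\textsf{H}$ (so wrong-candidate keys are distinct from the right one) and a keyed pseudorandomness property of $\textsf{SymDec}$ on fixed ciphertexts, which for AES-256 must be argued under the ideal-cipher or a related assumption rather than plain IND-CPA. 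The rest of the argument is a routine sequence of hybrids.\qed
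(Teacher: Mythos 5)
Your proposal is correct and reaches the same conclusion as the paper, but it is a genuinely more rigorous route. The paper's own proof is essentially an enumeration-plus-assertion: it lists what \textsf{VerifApp} receives ($r$, \textsf{voteref}, $CL$, $\mathcal{E}_{\textsf{sym}}$) and computes ($Q$ and the candidate ciphertexts), states that ``it can be easily checked'' that none of these reveal the intention, and concludes that an attacker can do no better than guessing the correct entry of $Q$ with probability $1/m$; the only substantive observation is that the voter matches $q$ against $Q$ with her own eyes, so the distinguished index never enters \textsf{VerifApp}'s view. You instead set up a proper indistinguishability game and a hybrid argument, and in doing so you surface the assumption the paper silently relies on: that decryptions of $\mathcal{E}_{\textsf{sym}}$ under the $m-1$ wrong candidate keys are indistinguishable from uniform strings of the same length as the true $q$ (the paper only gestures at this in its usability section, where it remarks that all $q_i$ must be ``random values of equal size'' and mutually indistinguishable). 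Your observation that plain IND-CPA does not suffice for the ``wrong-key decryption looks uniform'' step, and that one needs a random-oracle/ideal-cipher-style assumption on $\textsf{H}$ and the block cipher, is a real point the paper never addresses. One loosely stated step worth tightening: in your final hybrid the \emph{key} of $\mathcal{E}_{\textsf{sym}}$ still depends on the bit $b$ even after the plaintext is replaced, so you should argue explicitly that for a uniform unknown plaintext the ciphertext distribution is independent of which (adversarially known) candidate key was used — a which-key-concealment property that again holds in the ideal-cipher model but is not implied by IND-CPA alone. With that caveat made explicit, your argument is a strictly stronger justification of the theorem than the one the paper gives.
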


\begin{proof}
	
	For the verification phase, {\sf VerifApp} receives the parameters {\sf voteref}, $r$, $CL=\{c_1,\cdots,c_m\}$, and $\mathcal{E}_{\textsf{sym}}$. Additionally, it computes the lists $Q=\{q_1,\cdots,q_m\}$ and $\{E_{\textsf{asym}}^1,\cdots,E_{\textsf{asym}}^m\}$ where $m$ is the number of candidates. It can be easily checked if the received parameters or the computed values reveal any information about the $\mathcal{V}$'s intention. Therefore, an attacker sniffing {\sf VerifApp} can only learn whether a voter already has checked her vote.

	Note that {\sf VerifApp} never learns which $q_j$ is the correct $q$ since the $\mathcal{V}$ checks the index of verification parameter, which reveals the intention of the $\mathcal{V}$, by her own eyes from the ordered list $Q$. Therefore, an attacker sniffing {\sf VerifApp} should guess the correct $q$ with probability of $\frac{1}{m}$ which is no more than guessing the candidate randomly. Hence, {\sf VerifApp} leaks no information about the choice of the $\mathcal{V}$. \qed
\end{proof}

\begin{theorem}[Privacy against malicious {\sf VFS}]
	Our verification mechanism described in Section \ref{lastproposedsystem} is secure against a malicious {\sf VFS}.
\end{theorem}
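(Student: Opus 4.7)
The plan is to show that the only data our protocol gives \textsf{VFS} beyond what it already receives in the original Estonian scheme is the verification parameter $q$, and that $q$ is statistically independent of the voter's choice $c$, so a malicious \textsf{VFS} gains no advantage over the original scheme in guessing $c$ — an advantage which is itself negligible under the semantic security of $\textsf{AsymEnc}_{pk_{S}}$ together with the $k$-of-$n$ threshold generation of $sk_{S}$.

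First I would enumerate exactly what the view of a malicious \textsf{VFS} contains across both stages of our protocol: the signed ciphertext $\textsf{SignEncVote}=\textsf{Sign}_{sk_{\mathcal{V}}}(E_{\textsf{asym}})$ with $E_{\textsf{asym}}=\textsf{AsymEnc}_{pk_{S}}(c,r)$, the verification string $q=q_{\textsf{left}}\|q_{\textsf{right}}$ received over the authenticated TLS channel, the self-generated \textsf{voteref}, and the quantities $\textsf{H}(E_{\textsf{asym}})$ and $\mathcal{E}_{\textsf{sym}}=\textsf{SymEnc}_{\textsf{H}(E_{\textsf{asym}})}(q)$ which \textsf{VFS} computes itself from data already in its possession. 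Since \textsf{VFS} derives the verification-stage values deterministically from its own state, the proof reduces to bounding how much $(E_{\textsf{asym}},q)$ leaks about $c$.

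Next I would argue that $q$ is independent of $c$: by construction $q_{\textsf{left}}$ is drawn uniformly by \textsf{VoterApp} and $q_{\textsf{right}}$ is chosen by $\mathcal{V}$ from her own coins, neither depending on the candidate. Therefore the marginal distribution of $q$ conditioned on $c$ is the same for every candidate, and passing $q$ to \textsf{VFS} leaks nothing about $c$ in the information-theoretic sense. The remaining ciphertext $E_{\textsf{asym}}$ is identical to what \textsf{VFS} already holds in the original Estonian protocol; since $sk_{S}$ is threshold-shared among \textsf{NEC} servers and \textsf{VFS} does not possess a decryption-enabling share, the semantic security of $\textsf{AsymEnc}$ gives a standard reduction: any \textsf{VFS}-adversary $\mathcal{A}$ distinguishing $c=c_a$ from $c=c_b$ with advantage $\varepsilon$ yields an IND-CPA adversary against $\textsf{AsymEnc}_{pk_{S}}$ with the same advantage, obtained by sampling $q$ and the signing key internally (legal because both are independent of $c$) and forwarding the IND-CPA challenge ciphertext as $E_{\textsf{asym}}$.

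The main obstacle will be justifying that no auxiliary channel silently couples $q$ to $c$ — in particular that the signature $\textsf{Sign}_{sk_{\mathcal{V}}}(E_{\textsf{asym}})$ leaks no extra information about $c$ (it is a signature on the ciphertext, not the plaintext) and that the TLS tunnel transmitting $q$ is confidential so that $q$ cannot be correlated with network-level side information about the vote. Once these independence conditions are pinned down, the reduction to the semantic security of $\textsf{AsymEnc}_{pk_{S}}$ under the threshold distribution of $sk_{S}$ is routine, and the statement follows.
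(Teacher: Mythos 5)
Your proposal is correct and takes essentially the same approach as the paper: the paper's proof likewise observes that the only new datum handed to \textsf{VFS} relative to the original Estonian voting phase is the verification parameter $q$, which carries no information about the vote, and then defers the remaining security argument to the analysis of the original scheme. You simply make explicit the view enumeration, the independence of $q$ from $c$, and the IND-CPA reduction that the paper leaves implicit in its citation of the Estonian security analysis.
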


\begin{proof}
	Our proposed verification mechanism uses the same voting phase with the Estonian i-voting system. The only difference is our mechanism is to the random verification parameter $q$ which gives no information about the vote. Therefore, the proof of the security of malicious {\sf VFS} is exactly the same with the Estonian scheme which is explained in \cite{EstSecAn-2010}.	\qed
\end{proof}

\section{Complexity and Usability Analysis} \label{complexity}

In Table \ref{comparison}, the complexity analysis of the proposed verification mechanism is tabulated. The cost of ridding the privacy leakage in the verification phase using symmetric key cryptography requires almost the same cost as the current Estonian i-voting scheme, specifically $1$ extra symmetric encryption for {\sf VFS} and $m$ extra symmetric encryptions for {\sf VerifApp}.

\begin{table}[!ht]
	\caption{Comparison of Vote Privacy and Computational Cost of the Verification Mechanisms where \textsf{SymEnc}, \textsf{AsymEnc}, \textsf{Sign} denote symmetric, asymmetric encryptions and digital signature.}
	\begin{center}
		\begin{tabular}{| p{2cm} | p{1.8cm} | p{1.5cm} || p{2.0cm} ||| p{3.2cm} |}
			\hline
			\centering
			&  &  &  & Vote Privacy \\
			& \hspace{0.2cm}{\sf \sf VoterApp} & \hspace{0.3cm} {\sf VFS} & \hspace{0.3cm} {\sf VerifApp} &  Against Corrupted \\
			& & & & Verification Device \\
			\hline  \hline
			\textbf{Estonian} & 1 \textsf{AsymEnc} & \hspace{0.5cm} $\emptyset$ & $m$ \textsf{AsymEnc} & \hspace{1.3cm}  \xmark\\ 
			\textbf{Verification} & + 1 \textsf{Sign} & & &  \\  \hline
			\textbf{This Paper} & 1 \textsf{AsymEnc} & 1 \textsf{SymEnc} & $m$ \textsf{AsymEnc} & \hspace{1.3cm} \checkmark \\ 
			& + 1 {\sf Sign} &  & + $m$ \textsf{SymEnc} &   \\ \hline	
		\end{tabular}
		\label{comparison}
	\end{center}
\end{table}

\subsection{Usability and Optimization Improvement for the Verification $q$}\label{usability}

The size of $q$ is important due to privacy and usability concerns. Firstly, because {\sf VerifApp} computes $q_i=\textsf{SymDec}_{\textsf{H}(E_{\textsf{asym}}^i)}(\mathcal{E}_{\textsf{sym}}), i=1,\cdots,m$ and outputs the list $Q=\{q_1,\cdots,q_m\}$, where each $q_i$ ($1\leq i \leq m$) must be random values of equal size and there exists a $\beta $ ($1 \leq \beta \leq m $) satisfying $q_\beta=q$. If the size does not match, the intended  can easily be detected. Furthermore, since $Q$ is disclosed to {\sf VerifApp}, each of the $q_i$'s must be indistinguishable in order to hide the intended vote and provide vote privacy.

\begin{table}[ht]
\centering
\caption{Shortened verification parameter on the screen on the voter computer.}
\label{fig:ParameterCheck}
	\begin{tabular}[h]{ | p{1.7cm} | p{0.2cm} |p{0.2cm} |p{0.2cm} |p{0.2cm} |p{0.4cm} |p{0.3cm} |p{0.3cm} |p{0.3cm} |p{0.3cm} |p{0.3cm} |p{0.3cm} |p{0.3cm} |p{0.3cm} |}	
		\hline 	
		Position $\#$ & 1 & 2 & 3 & 4 & $\cdots$ & 25 & 26 & 27 & 28 & \red{29} & \red{30} & \red{31} & \red{32} \\ \hline \hline
		$q$ & a & B & x & Q & $\cdots$ & q & K & j &  E &  \red{m} & \red{A} & \red{l} & \red{Q}\\ \hline 
		$q_{\textsf{right}}$ & &  & & & & & & & & \red{m} & \red{A} & \red{l} & \red{Q}\\ \hline			
	\end{tabular}
\end{table}

Note that, on the other side, it is not usable and impractical for a voter to generate a random $q$ of $t$-length bits. For example, if AES-256 and SHA3-256 are used as the encryption function then the voter must also generate a random $q$ of size 256-bits. Therefore, instead of generating a random value $q$, we proposed to generate $q=q_{\textsf{left}}||q_{\textsf{right}}$ where $q_{\textsf{left}} \in \{0,1\}^{224}$ is randomly generated by {\sf VoterApp} and $q_{\textsf{right}} \in_R \{0,1\}^{32}$ is provided by the $\mathcal{V}$. In this way, a voter can easily compare the value $q$ on \textsf{VoterApp} to the $q_i$'s on \textsf{VerifApp}. Hence, the protocol will run as defined in Section \ref{lastproposedsystem}, but both {\sf VoterApp} and {\sf VerifApp} will only display (for usability purposes) the last 32 bits of the verification parameters of $q$ and $q_1,\cdots,q_m$. A voter has to confirm with her own eyes that $q_{\textsf{right}}$ (i.e., the last 32 bits as 4 printable characters) is equal to the last 32 bits of $q_\beta$ exist in $Q$.

\begin{table}[ht]
	\centering
	\caption{The original $q$ and the respective $q_{\textsf{right}}$ verification values where $c_\beta$ is the intended vote.}
	\label{tab:findParameter}
	\begin{tabular}[h]{ | p{0.5cm} | p{0.5cm} || p{6.6cm} ||p{1.9cm} | }
		\hline
		\hspace{0.1cm}\textbf{$c_i$} & \hspace{0.1cm}\textbf{$q_i$} & \hspace{2.cm}\textbf{256-bits $q$} & \hspace{0.15cm}\textbf{$q_{\textsf{right}}$}\\
		\hline 	\hline
		\hspace{0.1cm}$c_1$ & \hspace{0.1cm}$q_1$ & qxvsEgaKMXpwApGDsNnPaNhjTJYt\red{qven} & \hspace{0.25cm} \red{qven} \\ \hline
		\hspace{0.2cm}\vdots & \hspace{0.2cm}\vdots & \hspace{2.9cm}\vdots & \hspace{0.7cm}\vdots \\ \hline 
		\hspace{0.1cm}$c_\beta$ & \hspace{0.1cm}$q_\beta$ & \hspace{0.15cm}aBxQwSOckfrzdYuaDNcvtTIDqKjE\red{mAlQ} & \hspace{0.25cm} \red{mAlQ}  \\ \hline
		\hspace{0.2cm}\vdots & \hspace{0.2cm}\vdots & \hspace{2.9cm}\vdots & \hspace{0.7cm}\vdots \\ \hline 
		\hspace{0.1cm}$c_m$ & \hspace{0.1cm}$q_m$ & \hspace{0.1cm} RatqPKvgtTAFectHpOeteDoPYKbT\red{kApp} & \hspace{0.25cm} \red{kApp} \\ \hline		
	\end{tabular}
\end{table}

As an illustration, we present a sample shortening operation in Table \ref{fig:ParameterCheck} for the parameter $q=$``aBxQwSOckfrzdYuaDNcvtTIDqKjEmAlQ''. Let $\ell=4$ and the position numbers are chosen as 8-bit words at the positions 29, 30, 31 and 32. Then, $q_{\textsf{right}}$ is shown as 'mAlQ' on the screen instead of whole $q$. In Table \ref{tab:findParameter}, Step 5 of the verification phase is presented as an example which shows the values on {\sf VerifApp} screen where $c_i$ is the $i$-th candidate and $q_i$ is the related verification value for $i=1,\cdots,m$.

\section{Conclusion}\label{conclusion}
Internet voting schemes are evolving and are already being practically used which aim to guarantee at least the same security level offered by the classical paper ballot voting systems. After the 2011 election in Estonia, a verification phase was integrated into the Estonian system to check whether the vote has been correctly recorded by avoiding possible attacks from malicious computers. However, this phase eventually displays the cast vote in plain form to the verifier application {\sf VerifApp} which may cause a serious privacy weakness and may compromise the fairness of the election. More concretely, during the verification phase of the Estonian voting system, displaying the vote in its plain form certainly may violate vote privacy. The problem comes from the fact that if the voter application {\sf VoterApp} running device is assumed to be compromised then the {\sf VerifApp} running device should also be assumed to be compromised. More concretely, there is no difference between trusting a device running the {\sf VoterApp} and trusting a device running the {\sf VerifApp}.

In this paper, we first point out this potential privacy issue of the Estonian verification mechanism in details. We further proposed a new and usable verification mechanism that does not disclose any information about the vote neither to the {\sf VerifApp} nor to the device where the {\sf VerifApp} runs. We show that our proposed verification system is strong against the aforementioned privacy weakness. As a future work, investigating a mechanism that resists the \textsf{Ghost Click Attack} and does not rely on additional post-channel communication would be interesting.

\end{document}